\let\svthefootnote\thefootnote
\newcommand\freefootnote[1]{%
	\let\thefootnote\relax%
	\footnotetext{#1}%
	\let\thefootnote\svthefootnote%
}
\newtheorem{theorem}{Theorem}[section]
\newtheorem{proposition}[theorem]{Proposition}
\newtheorem{remark}[theorem]{Remark}
\newcommand{\half}{\frac{1}{2}}
\newcommand{\vect}[2]{
	\begin{pmatrix}
		#1 
		\\
		#2
	\end{pmatrix}
}
\newcommand{\mat}[4]{
	\begin{pmatrix}
		#1 & #2 
		\\
		#3 &#4
	\end{pmatrix}
}
\newcommand{\svect}[2]{
\left(\begin{smallmatrix}#1 \\#2 \end{smallmatrix}\right)
}
\newcommand{\smat}[4]{
\left(\begin{smallmatrix}
		#1 & #2 
		\\
		#3 &#4 
\end{smallmatrix}\right)
}
\DeclareMathOperator{\sgn}{sgn}
\begin{document}

\title{Inertial Transformations and the Nonexistence of Tachyons for Spacetime Dimension Greater than Two}
\author{David Acton}
\email{davidacton95@gmail.com}
\author{Owen Doyle}
\email{owend745@gmail.com}
\author{Michael P. Tuite}
\email{michael.tuite@universityofgalway.ie } 
\address{School of Mathematical and Statistical Sciences\\ 
University of Galway, Galway H91 TK33, Ireland}

\date{}

\begin{abstract}
We consider real linear transformations between two inertial frames with constant relative speed $v$ in a $d$-dimensional spacetime where light moves with constant speed $c=1$ (for some chosen units) in all frames. For $d=2$ we show that the standard relative velocity formula holds and that any associated anisotropic conformal factor is multiplicative under composition of  inertial transformations for $|v|\neq 1$. Assuming that the inertial transformation matrix is continuous in a neighbourhood of $v=0$  and differentiable at $v=0$, we determine the conformal factor for all $|v|\neq 1$. For an isotropic spacetime, the general solution reduces to the standard $d=2$ Lorentz transformation for $|v|<1$ or to a Tachyonic transformation for $|v|>1$, first described by Parker in 1969. For $d>2$ we show that no Tachyonic-like inertial transformations exist which are compatible with constant light speed. 
\end{abstract}

\maketitle

\section{Introduction}
There is a long but sporadic research history concerning the extension of Einstein's special relativity \cite{Einstein.1905} to superluminal or tachyonic particles. Early speculations of Sommerfeld \cite{Sommerfeld} in 1904 were followed much later by Bilaniuk et al \cite{bilaniuk_deshpande_sudarshan_1962} in 1962, Feinberg \cite{Feinberg1967} in 1967, Parker\footnote{Most of our results were completed before we discovered Parker's paper \cite{Parker1969} which appears to have been overlooked by many recent authors.} \cite{Parker1969} in 1969 and many others since e.g. \cite{Goldoni1972, Lord_Shankara_1977, Marchildon_Antippa_Everett_1983, Sutherland_Shepanski_1986, ParkPark1996, hill_cox_2012,  Andréka_Madarász_Németi_Székely_2013, Jin_Lazar_2014,Dragan_Ekert_2020}. 
Most of this work was based on modifications of Lorentz transformations by either (i) allowing for an imaginary Lorentz factor $\gamma(v)=(1-v^{2})^{-\half}$ for $|v|>1$  (where $c=1$) leading to imaginary spacetime coordinates and mass or (ii)  replacing $\gamma(v)$ by $\pm (v^{2}-1)^{-\half}$. Here we consider the problem from first principles based on Einstein's axioms alone. 

In Section~2 we consider all real linear Inertial Transformations (ITs) between two inertial frames in $d=2$ dimensional spacetime with constant relative velocity $v$. We initially consider two standard relativistic axioms (I): Free particles move with constant (frame dependent) velocity in every inertial frame and (II):
The speed of light $c=1$ is constant in every inertial frame. This is similar to Parker's approach \cite{Parker1969} except that we allow for spatial anisotropy expressed in terms of an overall conformal factor $\phi(v)$. We show that Lorentz-like ITs with $|v|<1$ and Tachyonic-like ITs with  $|v|>1$ exist. Composing ITs we find that the standard addition of velocity formula holds and that $\phi(v)$ is multiplicative under composition for all $|v|\neq 1$. In Section~3 we discuss a "flipping" symmetry relating $d=2$ ITs for $v$ and $v^{-1}$. This together with composition multiplicativity of $\phi(v)$
allows us to determine its general form for $|v|\neq 1$ for all $d=2$ ITs assuming that $\phi(v)$ is continuous in a neighbourhood of $v=0$ and differentiable at $v=0$. We then consider Axiom~(III): Spacetime is spatially isotropic, which implies trivial unit conformal factors. We recover the standard Lorentzian IT when $|v|<1$ and the Parker isotropic Tachyonic IT  \cite{Parker1969} when $|v|>1$ (which also appeared later in \cite{Marchildon_Antippa_Everett_1983, Jin_Lazar_2014}). Section~4 is a brief discussion on the energy-momentum in this setting. Section~5 shows that for $d>2$, Axioms~I and II are incompatible for $|v|>1$. 
The argument is elementary and follows from an analysis of light which in one frame is moving perpendicularly to the direction of relative motion to another frame.

\section{Inertial Transformations in Dimension $d=2$ Spacetime  }
Let us consider the general form of an Inertial Transformation (IT) between  inertial frames in spacetime dimension $d=2$, subject to the standard axioms of special relativity \cite{Einstein.1905}. Let $S$ and $S'$ denote two inertial reference frames where $S'$ moves at a constant velocity $v$ relative to $S$ and $S$ moves at a constant velocity $-v$ relative to $S'$. Every spacetime event is described by real coordinates $(x,t)$ in $S$  and real coordinates $(x',t')$ in $S'$ where the coordinate axes are chosen such that $(x,t)=(0,0)$ and $(x',t')=(0,0)$ describe the same event. We also assume that $S'$ is identified with $S$ for $v=0$. We consider the following axioms: 
\begin{enumerate}[I.]
\item Free particles move with constant (frame dependent) velocity in every inertial frame. 
\item The speed of light is $c=1$ (for a choice of units) in every inertial frame. 
\item  Spacetime is spatially isotropic.
\end{enumerate}
We initially examine the consequences of Axioms~I and II. These are sometimes  the only axioms cited in elementary expositions of special relativity although Einstein did exploit spatial isotropy without formally stating it as an initial axiom \cite{Einstein.1905}.

Axiom~I implies that a particle's motion is described by straight lines in $S$ and $S'$. Thus $(x,t)$ and $(x',t')$ are related by a linear IT
\begin{align}
\label{eq:xMap}
\vect{x}{t}=\mathbf{G}(v)\vect{x'}{t'} \mbox{ for } \mathbf{G}(v):=\mat{A(v)}{B(v)}{C(v)}{D(v)},
\end{align}
where $A,B,C,D$ are real functions of $v$ and $\mathbf{G}(0)=\mathbf{I}$, the identity matrix.  
Axiom~I implies that a particle at rest in $S'$ with coordinates
$(0,t')$ has coordinates $(vt,t)$ in $S$ implying $B(v)=vD(v)$. 
Similarly, a particle at rest in $S$ with coordinates
$(0,t)$ has coordinates $(-vt',t')$ in $S'$ implying $B(v)=vA(v)$. 
Axiom~II implies that a photon  with coordinates $(t,t)$ in $S$   has coordinates $(t',t'$) in $S'$ so that $B(v)=C(v)$. 
Therefore
\begin{align}
\label{eq:MA}
\mathbf{G}(v)=A(v)\mat{1}{v}{v}{1},
\end{align} 
where  $A(0)=1$. 
Axiom~I implies that $\svect{x'}{t'}=\mathbf{G}(-v)\svect{x}{t}$ with
$\mathbf{G}(v)\mathbf{G}(-v)=\mathbf{I}$. Hence 
\begin{align}
\label{eq:AvAminv}
A(v)A(-v)(1-v^{2})=1.
\end{align}
\begin{remark}\label{rem:AG_deform}
Notice that $A(v)\neq 0$ for $|v|\neq 1$ and $A(v)$ is singular for at least one value $v\in\{-1,1\}$. In particular, we therefore cannot continuously deform $\mathbf{G}(v)$ from $v=0$ to $v=\pm \infty$.
\end{remark}	
\begin{remark}
In many elementary derivations of  $d=2$ Lorentz transformations (e.g. \cite{Rindler, Taylor, Williams})  it is often an unstated assumption that $A(v)$ is an even function in $v$ so that  \eqref{eq:AvAminv} implies that $|v|<1$ and 
\begin{align}
	\label{eq:gammav}
	A(v)=\gamma(v):=\left(1-v^{2}\right)^{-\half},
\end{align}
the standard gamma factor  (in units with $c=1$) with  Lorentz transformation \eqref{eq:MA}.
Later we will find the general solution to \eqref{eq:AvAminv} assuming only that $\mathbf{G}(v)$  is continuous in a neighbourhood of $v=0$ and differentiable at $v=0$. 
\end{remark}
Define, for $|v|\neq 1$, a real conformal factor  $\phi(v)$:
\begin{align}
\label{eq:phidef}
\phi(v):=\begin{cases*}
	A(v)\left(1-v^{2}\right)^{\half} & for $|v|<1$,
	\\
	vA(v)\left(1-v^{-2}\right)^{\half} & for  $|v|>1$,
\end{cases*}
\end{align}
where $\phi(v)\phi(-v)=1$ for $|v|\neq 1$ and $\phi(0)=1$ from \eqref{eq:AvAminv}. 
Thus the $d=2$ ITs satisfying Axioms I and II are described  by
\begin{alignat}{3}
	\label{eq:xtL}
	&x = \frac{\phi(v)}{\sqrt{1-v^{2}}}(x'+vt'), \quad
	&&t  = \frac{\phi(v)}{\sqrt{1-v^{2}}}(t'+vx'),\quad
	&&\mbox{for }|v|<1,
	\\
	\label{eq:xtT}
	&x = \frac{\phi(v)}{\sqrt{1-v^{-2}}}(t'+v^{-1}x'),\quad 
	&&t = \frac{\phi(v)}{\sqrt{1-v^{-2}}}(x'+v^{-1}t'),\quad 
	&&\mbox{for }|v|>1.
\end{alignat}
It is useful to define the following matrices for real $\sigma$: 
\begin{align*}
\mathbf{L}(\sigma):=\mat{\cosh\sigma}{\sinh\sigma}{\sinh\sigma}{\cosh\sigma},
\quad
\mathbf{T}(\sigma):=\mathbf{F}\mathbf{L}(\sigma)
=\mathbf{L}(\sigma)\mathbf{F}=\mat{\sinh\sigma}{\cosh\sigma}
{\cosh\sigma}{\sinh\sigma},
\end{align*}
where  $\mathbf{F}=\smat{0}{1}{1}{0}$ is the ``flipping" matrix. 
Note that
$\det \mathbf{L}(\sigma) =1 $ and $\det \mathbf{T}(\sigma)=-1$. Furthermore, all $\mathbf{L}$ and $\mathbf{T}$ matrices commute
and obey the relations \cite{Parker1969}
\begin{align}\label{eq:LTmult}
\mathbf{L}(\sigma_{1})\mathbf{L}(\sigma_{2})=
\mathbf{T}(\sigma_{1})\mathbf{T}(\sigma_{2})=
\mathbf{L}(\sigma_{1}+\sigma_{2}),
\quad 
\mathbf{L}(\sigma_{1})\mathbf{T}(\sigma_{2})=\mathbf{T}(\sigma_{1}+\sigma_{2}).
\end{align}
From \eqref{eq:MA} and \eqref{eq:phidef} we may therefore write every $d=2$ IT matrix as follows:
\begin{align}
\label{eq:MLT}
\mathbf{G}(v)=
\begin{cases*}
	\phi(v)\mathbf{L}(\psi)& where $\psi := \tanh^{-1}(v)$ for $|v|<1$,
	\\
	\phi(v)\mathbf{T}(\chi)& where $\chi := \tanh^{-1}(v^{-1})$ for $|v|>1$.
\end{cases*}
\end{align}
If $\phi(v)=1$ then $\mathbf{L}(\psi)$ is the standard Lorentz transformation describing inertial frame transformations when $|v|<1$.
With $|v|>1$ we refer to $\mathbf{T}(\chi)$ as a Tachyonic-like transformation. Since $\mathbf{F}\mathbf{G}(v)=\mathbf{G}(v)\mathbf{F}$ we find $\smat{x}{t}{t}{x}=\mathbf{G}(v)\smat{x'}{t'}{t'}{x'}$ whose determinant leads to a generalised Minkowski invariance relation
	\begin{align}
		\label{eq:Mink}
		x^{2}-t^{2}  = \det \mathbf{G}(v)\, \left(x'^{2}-t'^{2}\right).	\end{align}
Note that $\det\mathbf{G}(v)=\phi(v)^2>0$ for $|v|<1$  and $\det\mathbf{G}(v)=-\phi(v)^2<0$ for $|v|>1$. Thus the sign of $\det\mathbf{G}(v)$ indicates whether $|v|<1$ or $|v|>1$.

We now consider the consequences of  composing two ITs arising from coordinate changes between three inertial frames $S,S'$ and $S''$. Let $S'$ have velocity $v_{1}$ relative to $S$ and let $S''$ have velocity $v_{2}$ relative to $S'$ so that
\begin{align*}
\vect{x}{t}=\mathbf{G}(v_{1})\vect{x'}{t'}=\mathbf{G}(v_{1})\mathbf{G}(v_{2})\vect{x''}{t''}.
\end{align*}
Let $v_{3}$ denote the velocity of $S''$ relative to $S$ so that $\mathbf{G}(v_{3})=\mathbf{G}(v_{1})\mathbf{G}(v_{2})$.
\begin{proposition}\label{prop:relvel}
 $\mathbf{G}(v_{3})=\mathbf{G}(v_{1})\mathbf{G}(v_{2})=\mathbf{G}(v_{2})\mathbf{G}(v_{1})$ for all $|v_{1}|,|v_{2}| \neq 1$ where $v_{3}$ is given by the general relative velocity formula
\begin{align}
\label{eq:relvel}	
v_{3}=\frac{v_{1}+v_{2}}{1+v_{1}v_{2}}, 
\end{align}
with a multiplicative conformal factor:
\begin{align}\label{eq:relphi}
\phi(v_{3})=\phi(v_{1})\phi(v_{2}).
\end{align}
\end{proposition}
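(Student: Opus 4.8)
The plan is to reduce everything to the explicit form \eqref{eq:MA}, using that every matrix $\mat{1}{v}{v}{1}=\mathbf{I}+v\mathbf{F}$ is a polynomial in the single flipping matrix $\mathbf{F}=\smat{0}{1}{1}{0}$, which satisfies $\mathbf{F}^{2}=\mathbf{I}$; in particular all such matrices pairwise commute. A direct multiplication then gives
\begin{align*}
\mathbf{G}(v_{1})\mathbf{G}(v_{2})=A(v_{1})A(v_{2})\bigl((1+v_{1}v_{2})\mathbf{I}+(v_{1}+v_{2})\mathbf{F}\bigr)=\mathbf{G}(v_{2})\mathbf{G}(v_{1}),
\end{align*}
which already yields the asserted commutativity. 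Assuming $1+v_{1}v_{2}\neq 0$ (the degenerate case is treated below), I would compare this with $\mathbf{G}(v_{3})=A(v_{3})\mat{1}{v_{3}}{v_{3}}{1}$ from \eqref{eq:MA}, which equals $\mathbf{G}(v_{1})\mathbf{G}(v_{2})$ by hypothesis. The $(1,1)$ entries give
\begin{align*}
A(v_{3})=A(v_{1})A(v_{2})(1+v_{1}v_{2}),
\end{align*}
and since $A(v_{1})A(v_{2})(1+v_{1}v_{2})\neq 0$ (recall $A(v)\neq 0$ for $|v|\neq 1$ by \eqref{eq:AvAminv}), dividing the $(1,2)$ entries by the $(1,1)$ entries forces $v_{3}=(v_{1}+v_{2})/(1+v_{1}v_{2})$, which is \eqref{eq:relvel}; the factorizations $v_{3}=\pm1\Leftrightarrow v_{1}=\pm1$ or $v_{2}=\pm1$ then show $|v_{3}|\neq 1$.

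It remains to convert the relation $A(v_{3})=A(v_{1})A(v_{2})(1+v_{1}v_{2})$ into \eqref{eq:relphi}. From \eqref{eq:phidef} one checks that $A(v)^{2}|1-v^{2}|=\phi(v)^{2}$ in both regimes, i.e. $\phi(v)^{2}=|\det\mathbf{G}(v)|$; hence taking $|\det(\cdot)|$ of $\mathbf{G}(v_{3})=\mathbf{G}(v_{1})\mathbf{G}(v_{2})$ gives $\phi(v_{3})^{2}=\phi(v_{1})^{2}\phi(v_{2})^{2}$, so $\phi(v_{3})=\pm\phi(v_{1})\phi(v_{2})$. To pin down the sign I would use \eqref{eq:MLT}--\eqref{eq:LTmult}: the product $\mathbf{G}(v_{1})\mathbf{G}(v_{2})$ equals $\phi(v_{1})\phi(v_{2})$ times one of $\mathbf{L}(\sigma)$ or $\mathbf{T}(\sigma)$, and since determinant signs match, $\mathbf{G}(v_{3})$ equals $\phi(v_{3})$ times a matrix of the \emph{same} type. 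Each such matrix has a strictly positive entry (the diagonal entry $\cosh\sigma$ of $\mathbf{L}(\sigma)$, the off-diagonal entry $\cosh\sigma$ of $\mathbf{T}(\sigma)$); equating that entry on both sides shows $\phi(v_{3})$ and $\phi(v_{1})\phi(v_{2})$ have the same sign, and with the square relation this gives \eqref{eq:relphi}. Equating the remaining entries then re-confirms \eqref{eq:relvel}. Alternatively one substitutes the piecewise $A$--$\phi$ formulas directly, using the identity $(1-v_{1}^{2})(1-v_{2}^{2})=(1+v_{1}v_{2})^{2}(1-v_{3}^{2})$ together with $\sgn(1-v_{3}^{2})=\sgn\bigl((1-v_{1}^{2})(1-v_{2}^{2})\bigr)$; a short sign table over the three regime combinations then finishes it.

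I expect the sign of $\phi(v_{3})$ to be the only genuinely subtle point: the defining relations determine only $\phi(v_{3})^{2}$ a priori, so one really does need the positivity of a $\cosh$-entry (equivalently, the explicit case analysis). Finally, in the degenerate case $1+v_{1}v_{2}=0$ one gets $\mathbf{G}(v_{1})\mathbf{G}(v_{2})=A(v_{1})A(v_{2})(v_{1}+v_{2})\,\mathbf{F}$, a pure flip that is not of the form $\mathbf{G}(v_{3})$ for any finite $v_{3}$; this corresponds to $v_{3}=\infty$, the formal limit of \eqref{eq:relvel}, consistently with $\mathbf{T}(0)=\mathbf{F}$ in \eqref{eq:MLT}.
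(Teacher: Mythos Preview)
Your proof is correct and takes a genuinely different route from the paper's. The paper argues entirely through the rapidity parametrization \eqref{eq:MLT}: it splits into the three regime combinations $(|v_{1}|,|v_{2}|)\in\{<1,>1\}^{2}$ (up to symmetry), applies the multiplication rules \eqref{eq:LTmult} for $\mathbf{L}$ and $\mathbf{T}$, and reads off both \eqref{eq:relvel} and \eqref{eq:relphi} simultaneously from the $\tanh$ addition formula in each case. You instead use the elementary observation that $\mathbf{G}(v)=A(v)(\mathbf{I}+v\mathbf{F})$ lies in the commutative algebra $\mathbb{R}[\mathbf{F}]$, obtain commutativity and the velocity law \eqref{eq:relvel} in one line with no case split, and only afterwards invoke \eqref{eq:MLT}--\eqref{eq:LTmult} to fix the sign of $\phi(v_{3})$ via the positivity of a $\cosh$ entry. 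Your approach isolates the velocity-addition formula as a pure $2\times 2$ matrix identity independent of the $\phi$ analysis, at the cost of a separate sign step; the paper's approach is more uniform but hides the algebraic simplicity behind the three-fold case analysis. Your explicit treatment of the borderline $1+v_{1}v_{2}=0$ (giving $v_{3}=\infty$, $\mathbf{T}(0)=\mathbf{F}$) is also a bit more careful than the paper, which in case~(ii) passes through $v_{3}^{-1}=0$ without comment.
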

\begin{proof}
The relations \eqref{eq:LTmult} imply that $\mathbf{G}(v_{1})\mathbf{G}(v_{2})=\mathbf{G}(v_{2})\mathbf{G}(v_{1})$ for all $v_{1},v_{2}$. By relabelling, we may therefore  assume that $|v_{1}|\le |v_{2}|$.
There are thus three cases to consider: (i) $|v_{1}|,|v_{2}| <1$, (ii) $|v_{1}|<1<|v_{2}| $  and (iii)  $1<|v_{1}|,|v_{2}|$. 
In case (i) we have
\begin{align*}
	\mathbf{G}(v_{3}) = \phi(v_{1})\phi(v_{2})
	\mathbf{L}(\psi_{1}) \mathbf{L}(\psi_{2})=
	\phi(v_{3})\mathbf{L}(\psi_{3}),
\end{align*}
with 
$\psi_{1}=\tanh^{-1}(v_{1})$, $\psi_{2}=\tanh^{-1}(v_{2})$ and
$\psi_{3}=\psi_{1}+\psi_{2}$ from \eqref{eq:LTmult} so that
$\tanh \psi_{3}=\frac{v_{1}+v_{2}}{1+v_{1}v_{2}}$  (as in standard special relativity) 
and $\phi(v_{3})=\phi(v_{1})\phi(v_{2})$.

For case (ii)  we have $\det \mathbf{G}(v_{3})=\det \mathbf{G}(v_{1}) \det \mathbf{G}(v_{2})<0$ so that $|v_{3}|>1$. Thus
\begin{align*}
	\mathbf{G}(v_{3}) = \phi(v_{1})\phi(v_{2})
	\mathbf{L}(\psi_{1}) \mathbf{T}(\chi_{2})=
\phi(v_{3})\mathbf{T}(\chi_{3}),
\end{align*}
with $\psi_{1}=\tanh^{-1}(v_{1})$, $\chi_{2}=\tanh^{-1}(v_{2}^{-1})$,  $\phi(v_{3})=\phi(v_{1})\phi(v_{2})$ and  $\chi_{3}=\psi_{1}+\chi_{2}$ from \eqref{eq:LTmult}. Hence \eqref{eq:relvel} holds since
\begin{align*}
\frac{1}{v_{3}}=	\tanh \chi_{3}=&
\frac{\tanh\psi_{1}+\tanh\chi_{2}}
{1+\tanh\psi_{1}\tanh\chi_{2}}
=\frac{v_{1}+v_{2}^{-1}}{1+v_{1}v_{2}^{-1}}
=\frac{1+v_{1}v_{2}}{v_{1}+v_{2}}.
\end{align*}
For case (iii) we have $\det \mathbf{G}(v_{3})>0$ implying that $|v_{3}|<1$. Then we find 
\begin{align*}
\mathbf{G}(v_{3}) = \phi(v_{1})\phi(v_{2})
\mathbf{T}(\chi_{1}) \mathbf{T}(\chi_{2})=
\phi(v_{3})\mathbf{L}(\psi_{3}),
\end{align*}
with $\phi(v_{3})=\phi(v_{1})\phi(v_{2})$ and $\psi_{3}=\chi_{1}+\chi_{2}$ from \eqref{eq:LTmult}. Hence 
\begin{align*}
v_{3}=\tanh \psi_{3}=&
\frac{\tanh\chi_{1}+\tanh\chi_{2}}
{1+\tanh\chi_{1}\tanh\chi_{2}}
=\frac{v_{1}^{-1}+v_{2}^{-1}}{1+v_{1}^{-1}v_{2}^{-1}}
=\frac{v_{1}+v_{2}}{1+v_{1}v_{2}}.
\end{align*}
\end{proof}
\begin{remark}
The relative velocity relation \eqref{eq:relvel} with trivial conformal factor $\phi(v)=1$ for $|v|<1$ is the standard one of special relativity. Its extension to superluminal velocities $|v|>1$ with $\phi(v)=1$ is discussed in \cite{Goldoni1972},  \cite{Lord_Shankara_1977} and \cite{hill_cox_2012} but with Tachyonic transformations differing from \eqref{eq:xtT} in each case cf. Remark~\ref{rem:compare} below.
\end{remark}

\section{Properties of the Conformal Factor}
\begin{proposition}\label{prop:phicon}
Suppose that $\phi(v)$ is continuous in an open neighbourhood of $v=0$. Then $\phi(v)$ is continuous for all $|v|\neq 1$.
\end{proposition}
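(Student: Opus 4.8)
The plan is to bootstrap continuity from a neighbourhood of the origin to an arbitrary admissible speed by exploiting the multiplicativity of $\phi$ under composition established in Proposition~\ref{prop:relvel}. Fix $v_{0}$ with $|v_{0}|\neq 1$; the goal is to show $\phi$ is continuous at $v_{0}$.

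The key device is relativistic subtraction of velocities. For $v$ near $v_{0}$ set
\[
u=u(v):=\frac{v-v_{0}}{1-v v_{0}},
\]
which is well defined and continuous in a neighbourhood of $v_{0}$ precisely because $|v_{0}|\neq 1$ forces $1-v_{0}^{2}\neq 0$, and a direct computation gives $\dfrac{v_{0}+u}{1+v_{0}u}=v$. Since $u(v_{0})=0$ and $u$ is continuous, for $v$ sufficiently close to $v_{0}$ the auxiliary speed $u$ lies in the open neighbourhood of $0$ on which $\phi$ is assumed continuous; in particular $|u|<1$, so $|u|\neq 1$. Applying Proposition~\ref{prop:relvel} with $v_{1}=v_{0}$ and $v_{2}=u$ (both of modulus $\neq 1$) then yields
\[
\phi(v)=\phi(v_{0})\,\phi(u).
\]

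Continuity at $v_{0}$ now drops out: as $v\to v_{0}$ we have $u\to 0$, hence $\phi(u)\to\phi(0)=1$ using $\phi(0)=1$ from \eqref{eq:AvAminv} together with the continuity hypothesis at $0$, so $\phi(v)\to\phi(v_{0})$. Since $v_{0}$ was an arbitrary point with $|v_{0}|\neq 1$, this proves the proposition.

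I do not anticipate a genuine obstacle; the argument is simply an exploitation of the group-like composition law for ITs. The only points needing care are checking that $u\mapsto\frac{v_{0}+u}{1+v_{0}u}$ really does invert $v\mapsto u(v)$ on a neighbourhood of $v_{0}$ — this is exactly where $|v_{0}|\neq 1$ is used, since the substitution degenerates when $v_{0}^{2}=1$ — and confirming that the auxiliary speed $u$ stays inside the admissible set $\{|u|\neq 1\}$ so that Proposition~\ref{prop:relvel} legitimately applies, which is automatic because $u$ is small.
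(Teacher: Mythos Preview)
Your argument is correct. The multiplicativity relation together with continuity of $\phi$ at the origin is all that is needed, and your use of the inverse velocity-addition map $v\mapsto u(v)=\dfrac{v-v_{0}}{1-vv_{0}}$ to pull the question back to $0$ is clean and handles both regimes $|v_{0}|<1$ and $|v_{0}|>1$ uniformly.

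The paper proceeds differently for $|v|<1$: it starts from continuity on a symmetric interval $(-\delta_{0},\delta_{0})$ and iterates the doubling map $f(x)=\dfrac{2x}{1+x^{2}}$ to produce a nested increasing sequence of intervals $\Delta_{n}$ on which $\phi$ is continuous, arguing that the endpoints converge to $\pm 1$ because the only fixed points of $f$ on $(-1,1)$ are $0,\pm 1$. Only afterwards does it treat $|v|>1$ by essentially your translation trick. Your approach bypasses the iteration entirely by working pointwise: fixing $v_{0}$ and translating back to $0$ in one step. What the paper's iteration buys is a slightly more explicit picture of how the domain of continuity grows under composition, but your route is shorter and conceptually more transparent, since it shows directly that continuity at a single point ($v=0$) propagates to every admissible point via the group-like structure of the composition law.
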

\begin{proof}
By assumption, $\phi$ is continuous on $\Delta_{0}:=(-\delta_{0},\delta_{0}) $ for some $0<\delta_{0}<1$.
From \eqref{eq:relphi} it follows that 
$\phi\left(\frac{u+v}{1+uv}\right)=\phi(u)\phi(v)$ is continuous in $\frac{u+v}{1+uv}$ for all  $u,v\in \Delta_{0}$. 
Thus $\phi$ is continuous on $\Delta_{1}:=(-\delta_{1},\delta_{1}) $  for $\delta_{1}=f(\delta_{0})$  with $f(x):=\frac{2x}{1+x^{2}}$ where the $\pm \delta_{1}$ interval endpoints correspond to $u=v=\pm \delta_{0}$.
 $f$ defines a monotonically increasing map from $(-1,1)$ to itself and   $\Delta_{1}=f(\Delta_{0})\supset\Delta_{0}$. 
Iterating this process we find that $\phi$ is continuous on $\Delta_{n+1}:=f(\Delta_{n})\supset \Delta_{n}$ for all $n\ge 0$. 
Since the fixed points of $ f(x)$ are $0,\pm 1$ we conclude that $\phi(v)$ is continuous for all $|v|<1$.
Eqn. \eqref{eq:relphi} further implies that for $|v|>1$ then $\phi\left(\frac{u+v}{1+uv}\right)=\phi(u)\phi(v)$ is continuous in $u$ for all $|u|<1$. Since 
$\left| \frac{u+v}{1+uv}\right |>1$ we find $\phi(v)$ is  also continuous for all $|v|>1$.
\end{proof}
Using Proposition~\ref{prop:phicon} we compute  $\lim_{u\rightarrow \pm\infty}\phi(u)\phi(v)=\lim_{u\rightarrow \pm\infty}\phi\left(\frac{u+v}{1+uv}\right)$ to find
\begin{align}\label{eq:phiu_uinv}
\phi(\pm\infty)\phi(v)=\phi(v^{-1}),
\end{align}
for all $|v|\neq 1$. Then $\phi(0)=1$  implies 
$\phi(+\infty)=\phi(-\infty)\in\{\pm 1\}$. Note that the sign of $\phi(\infty)$ is not determined from the condition $\phi(0)=1$ cf. Remark~\ref{rem:AG_deform}.

 From \eqref{eq:MLT} we find that $\svect{x}{t}
\rightarrow\phi(\infty)\mathbf{F}\svect{x'}{t'}=
\phi(\infty)\svect{t'}{x'}$ as $|v|\rightarrow\infty$. Therefore for $|v|>1$ we make a conventional\footnote{The alternative choice of $\phi(\infty)=-1$ is not physically distinguishable from our choice but rather just represents alternative $S'$ axis directions when $|v|>1$. This is at variance with remarks in \cite{hill_cox_2012,Andréka_Madarász_Németi_Székely_2013}.} 
choice of the direction of the $S'$ axes so that $\phi(\infty)=1$ (in analogy with the convention that $S=S'$ for $v=0$ which implies $\phi(0)=1$). 
With this convention, we find from \eqref{eq:phiu_uinv} that
\begin{align}
\label{eq:phivinv}
\phi(v)=\phi(v^{-1}).
\end{align}
Furthermore, we obtain the following  $d=2$ IT ``flipping" symmetry\footnote{This suggests the term ``Ayoade symmetry".} 
\begin{align}
\label{eq:Flip}
\mathbf{G}(v^{-1})=\mathbf{F}\mathbf{G}(v)=\mathbf{G}(v)\mathbf{F}, \end{align}
i.e. $A(v^{-1})=vA(v)$. In particular, for $|v|=\infty$ we find  $\svect{x}{t}=\svect{t'}{x'}$ from \eqref{eq:xtT} so that the space and time variables are flipped \cite{Lord_Shankara_1977, Andréka_Madarász_Németi_Székely_2013} irrespective of the sign of $v=\pm \infty$. 
Note that no physical observation can distinguish  $v=\infty$ from $v=-\infty$.
With $|v|=\infty$, the worldline of a particle at rest at $x'=0$ in $S'$ for all $t'$ is observed as an instantaneous spacelike worldline $t=0$ in $S$ for all $x$ and similarly, for the observation in $S'$  of a particle at rest at $x=0$ in $S$. 
More generally, suppose a tachyon has velocity $u>1$ relative to a frame $S$. 
Let $S'$ be another frame moving with velocity $v=u^{-1}+\varepsilon$ relative to $S$ where $|v|<1$ for sufficiently small $\varepsilon$.
The tachyon velocity in $S'$ is
\begin{align*}
	u'=\frac{u-v}{1-uv}= -\frac{1}{\varepsilon}\left(1-u^{-2}\right)+u^{-1}\rightarrow \mp \infty,
\end{align*}	
as $\varepsilon\rightarrow 0\pm$. Thus, with $v=u^{-1}$, the particle has infinite velocity in $S'$ where  $u'=\infty$ and $u'=-\infty$ are physically indistinguishable scenarios.

We now determine the general form of $\phi(v)$ on further assuming that $\phi(v)$ is differentiable at $v=0$ and with $\phi(\infty)=1$ by the above convention.
\begin{proposition}\label{prop:phi}
Assume that $\phi'(0)$ exists. Then $\phi'(v)$ exists for all $|v|\neq 1$ and
	\begin{align}
		\label{eq:phiv}
		\phi(v)=\left| \frac{1+v}{1-v}\right|^{\alpha},
	\end{align}
for $|v|\neq 1$ where $\alpha:=\half\phi'(0)$. 
\end{proposition}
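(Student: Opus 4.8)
The plan is to pass to the rapidity variable, where the relative-velocity composition law becomes ordinary addition and \eqref{eq:relphi} turns into the multiplicative Cauchy functional equation. Concretely, for $|v|<1$ I would write $v=\tanh\psi$ with $\psi=\tanh^{-1}(v)\in\mathbb{R}$ and set $g(\psi):=\phi(\tanh\psi)$. By Proposition~\ref{prop:phicon}, $\phi$ is continuous on $(-1,1)$, hence $g$ is continuous on all of $\mathbb{R}$; moreover $\phi(v)\phi(-v)=1$ from \eqref{eq:AvAminv} forces $\phi(v)\neq 0$ for $|v|<1$, so $g$ is nowhere zero, and since $g(0)=\phi(0)=1>0$ the intermediate value theorem gives $g>0$ throughout. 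The key point is that $\tanh$ converts addition of rapidities into the velocity-addition formula \eqref{eq:relvel}, so \eqref{eq:relphi} reads $g(\psi_1+\psi_2)=g(\psi_1)g(\psi_2)$ for all $\psi_1,\psi_2\in\mathbb{R}$.

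Next I would feed in the single-point differentiability hypothesis. Since $g=\phi\circ\tanh$ and $\tanh$ is differentiable at the origin with derivative $1$, $g'(0)$ exists and equals $\phi'(0)$. Differentiating the functional equation in the usual way — fixing $\psi$ and writing $\frac{g(\psi+h)-g(\psi)}{h}=g(\psi)\cdot\frac{g(h)-g(0)}{h}$, then letting $h\to 0$ — shows that $g$ is differentiable everywhere with $g'(\psi)=g'(0)\,g(\psi)$. Solving this linear ODE with $g(0)=1$ (for instance by checking that $\bigl(g(\psi)e^{-g'(0)\psi}\bigr)'\equiv 0$) yields $g(\psi)=e^{g'(0)\psi}=e^{\phi'(0)\psi}$. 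Equivalently, one could apply $\log$ and invoke the classical fact that a solution of the additive Cauchy equation which is continuous (a fortiori differentiable) at one point is linear; the differentiability hypothesis is in any case enough to make the argument self-contained.

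Finally I would translate back and extend past $|v|=1$. For $|v|<1$, substituting $\psi=\tanh^{-1}(v)=\tfrac12\log\frac{1+v}{1-v}$ gives $\phi(v)=e^{\phi'(0)\tanh^{-1}(v)}=\left(\frac{1+v}{1-v}\right)^{\phi'(0)/2}=\left|\frac{1+v}{1-v}\right|^{\alpha}$ with $\alpha:=\tfrac12\phi'(0)$, which is manifestly differentiable on $(-1,1)$. For $|v|>1$ I would use the flipping relation \eqref{eq:phivinv}, namely $\phi(v)=\phi(v^{-1})$: since $|v^{-1}|<1$ and $v\neq 0$, $\phi$ is differentiable at $v$ by the chain rule, and $\phi(v)=\left|\frac{1+v^{-1}}{1-v^{-1}}\right|^{\alpha}=\left|\frac{v+1}{v-1}\right|^{\alpha}=\left|\frac{1+v}{1-v}\right|^{\alpha}$, which completes the proof.

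As for the main obstacle: nothing here is deep, and the crux is simply recognising that composition of ITs is addition in the rapidity picture, so that \eqref{eq:relphi} is \emph{exactly} the multiplicative Cauchy equation. The two places that need care, and are easy to gloss over, are that $g$ must be defined and well behaved on all of $\mathbb{R}$ — this needs the global continuity of Proposition~\ref{prop:phicon}, not merely the assumed continuity near $v=0$ — and that $\phi$ is nonvanishing, so that the exponential/logarithmic manipulations and the ODE step are legitimate. Once those are in place the "differentiate the functional equation" trick does the rest, and differentiability of $\phi$ at all $|v|\neq 1$ comes out as a byproduct.
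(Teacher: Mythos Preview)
Your argument is correct and is, at bottom, the same proof as the paper's: exploit the multiplicativity \eqref{eq:relphi} to show that differentiability of $\phi$ at $0$ propagates to all $|v|\neq 1$, derive a first-order ODE, integrate with $\phi(0)=1$, and then extend to $|v|>1$ via the flipping relation \eqref{eq:phivinv}. The only difference is packaging. You pass to the rapidity $\psi=\tanh^{-1}v$ so that \eqref{eq:relphi} becomes the textbook multiplicative Cauchy equation $g(\psi_1+\psi_2)=g(\psi_1)g(\psi_2)$ and the resulting ODE is the trivial $g'=\,g'(0)\,g$; the paper instead works directly in $v$, writing
\[
\frac{\phi(v+\varepsilon)-\phi(v)}{\varepsilon}
=\phi(v)\,\frac{\phi\!\left(\dfrac{\varepsilon}{1-v(v+\varepsilon)}\right)-1}{\varepsilon}
\]
to obtain $\phi'(v)=2\alpha\,\phi(v)/(1-v^{2})$, which is your equation after the substitution $v=\tanh\psi$. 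Your route makes the structure (Cauchy equation, exponential solution) more transparent and lets you quote a classical result if desired; the paper's route avoids the change of variables and needs no appeal to Proposition~\ref{prop:phicon} or the positivity/IVT step, since the difference-quotient identity already yields differentiability directly. Either way the content is identical.
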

\begin{proof}
With $|v|\neq 1$ and arbitrarily small $\varepsilon$ consider
\begin{align*}
\frac{1}{\varepsilon}\left(\phi(v+\varepsilon)-\phi(v)\right)
=& \phi(v)\frac{1}{\varepsilon}\left( \phi(-v)\phi(v+\varepsilon)-1\right)
\\
= & \phi(v)\frac{1}{\varepsilon}\left( \phi\left(\frac{\varepsilon}{1-v(v+\varepsilon) }\right)-1\right),
\end{align*}
using \eqref{eq:relphi}. By assumption, the zero limit of $\varepsilon$ exists on the RHS implying that
\begin{align}\label{eq:phip}
	\phi'(v)=\phi(v)\frac{2\alpha}{1-v^{2}},
\end{align}
for all $|v|\neq 1$.
For $|v|<1$ and recalling $\phi(0)=1$ then \eqref{eq:phip} integrates to
\begin{align}
	\label{eq:phiL}
	\phi(v)=\left(\frac{1+v}{1-v}\right)^{\alpha}.
\end{align}
Eqn. \eqref{eq:phivinv} implies that for $|v|>1$
\begin{align*}
	\phi(v)=\left(\frac{1+v^{-1}}{1-v^{-1}}\right)^{\alpha}
	=\left| \frac{1+v}{1-v}\right|^{\alpha}.
\end{align*}
\end{proof}
\begin{remark}
From \eqref{eq:phidef} and \eqref{eq:Flip} we have
	\begin{align*}
		A(v):=\begin{cases*}
			\left(1+v\right)^{\alpha-\half}\left(1-v\right)^{-\alpha-\half}  & for $|v|<1$,
			\\
			v^{-1}\left(1+v^{-1}\right)^{\alpha-\half} \left(1-v^{-1}\right)^{-\alpha-\half}  & for  $|v|>1$.
		\end{cases*}
	\end{align*}
	Note that $A(v)$ is singular at $v=1$ for $\alpha>-\half$ and at $v=-1$ for $\alpha<\half$ confirming $\mathbf{G}(v)$ is singular for at least one value $v\in\{-1,1\}$ as in Remark~\ref{rem:AG_deform}.
\end{remark}
 We now consider Axiom~III. Consider a clock at rest at the origin of $S'$ moving at velocity $v$ for $|v|<1$ with respect to $S$. Let $\tau$ be the proper time interval between two ticks. This  is measured in $S$  with dilated time interval 
\begin{align}\label{eq:Ttau}
	T(v)=(1-v^{2})^{-\half}\phi(v)\tau 
	=(1-v)^{-\half-\alpha}(1+v)^{-\half+\alpha}\tau.
\end{align}
If  $\alpha\neq 0$ then $T(v)\neq T(-v)$ so that the time dilation depends on the direction of motion 
i.e. the spacetime is spatially anisotropic. Thus the observed decay rate of an unstable particle would depend on its direction of motion along the $X$ axis. Therefore we are led to adopt Axiom~III so that $\alpha=0$ with trivial conformal factor $\phi(v)=1$ for all $v$. This leads to the standard 2-d Lorentz transformation $\mathbf{G}(v)=\mathbf{L}(\psi)$ for $\psi=\tanh^{-1}(v)$ for $|v|<1$ from \eqref{eq:MLT} described in \eqref{eq:xtL} with $\phi(v)=1$.
However, Axiom~III still allows for tachyonic transformations $\mathbf{G}(v) = \mathbf{T}(\chi )$ for $\chi=\tanh^{-1}(v^{-1})$ for $|v|>1$ described in \eqref{eq:xtT} with $\phi(v)=1$ given by
\begin{align}\label{eq:xtTIII}
\vect{x}{t} = \frac{1}{\sqrt{1-v^{-2}}}\vect{t'+v^{-1}x'}{x'+v^{-1}t'}
	=\frac{\sgn{v}}{\sqrt{v^{2}-1}}\vect{x'+vt'}{t'+vx'},\quad |v|>1.
\end{align}
Thus  \eqref{eq:AvAminv} has the standard isotropic even Lorentz solution $A(v)=\gamma(v)$ of \eqref{eq:gammav} for $|v|<1$ but also an isotropic odd tachyonic solution  for $|v|>1$ given by\footnote{The other odd solution $A(v)=-\sgn{v}/\sqrt{v^{2}-1}$ corresponds to the alternative choice of $S'$ axes directions for $\phi(\infty)=-1$ where $(x',t')=-(t,x)$ for $|v|=\infty$.}
\begin{align*}
	A(v)=\frac{1}{v\sqrt{1-v^{-2}}}=\frac{\sgn{v}}{\sqrt{v^{2}-1}}.
\end{align*}
\begin{remark}\label{rem:compare}
We note that \eqref{eq:xtTIII} agrees with tachyonic ITs apparently first described by Parker \cite{Parker1969} in 1969 and later again in \cite{Marchildon_Antippa_Everett_1983, Jin_Lazar_2014,Dragan_Ekert_2020}
but not those described by Goldoni \cite{Goldoni1972} in 1972  and later in 
\cite{Lord_Shankara_1977, hill_cox_2012,Andréka_Madarász_Németi_Székely_2013} which do not include the necessary $\sgn{v}$ factors required for consistency in the  composition of  ITs as in Proposition~\ref{prop:relvel}. 
\end{remark}

	\section{$d=2$ Energy-Momentum}
Let us assume Axioms~I--III.   Suppose that a particle has velocity $u$ relative to  $S$ and coordinates $(x,t)$. Let $S_{0}$ be the particle's rest frame with coordinates $(x_{0},t_{0})=(0,\tau)$ for proper time $\tau$ and let $E_{0}=m_{0}$ be the rest mass energy (in units where $c=1$) and $p_{0}=0$ the momentum. We define the energy-momentum vector in $S$ by
\begin{align}\label{eq:pEdef}
	\vect{p}{E}
	= m_0 \frac{d}{d \tau}\vect{x}{t}=\mathbf{G}(u)\vect{0}{m_{0}},
\end{align}
for $\mathbf{G}(u)$ of \eqref{eq:MLT} with $\phi(u)=1$. 
Similarly, the energy-momentum $(p',E')$ in another frame $S'$ moving at velocity $v$ relative to $S$ is given by $\svect{p}{E}= \mathbf{G}(v)\svect{p'}{E'}$ (so that $(p',E')=(0,m_{0})$ for $v=u$). If $|u|<1$ then \eqref{eq:pEdef} is the usual Einstein energy-momentum $(p,E)=(mu,m)$ for relativistic mass $m=(1-u^{2})^{-\half}m_{0}$. For $|u|>1$ we find, similarly to \eqref{eq:xtT}, that
\begin{align}\label{eq:pET}
	\vect{p}{E}
	=\frac{1}{\sqrt{1-u^{-2}}}\vect{m_{0}}{m_{0}u^{-1}}
	=\frac{\sgn{u}}{\sqrt{u^{2}-1}}\vect{m_{0}u}{m_{0}}.
\end{align}
In this case  the momentum $p$ is always positive whereas $\sgn E=\sgn u$. Furthermore, $(p,E)\rightarrow (m_{0},0)$  as $|u|\rightarrow \infty$ so that the momentum and energy variables are flipped relative to the rest frame values just as for the space time variables. This is consistent with our earlier remarks that  $u= \infty$ and $u=-\infty$ cannot be physically distinguished so that $p$ does not depend on the direction of $u$. 
Finally, we note that a generalised Minkowski invariance relation like  \eqref{eq:Mink}  applies to energy-momentum with
\begin{align}
	E^{2}-p^{2}  = \det \mathbf{G}(v)\, m_{0}^2,
\end{align}
where 
$\det \mathbf{G}(v)=1$ for  $|v| < 1$ and $\det \mathbf{G}(v)=-1$ for $|v| >1$.

\section{Absence of Tachyonic-like ITs for $d>2$}
We now show that Axioms~I and II imply that only Lorentz-like ITs are possible for spacetime dimension $d>2$. 
The argument does not invoke causality or the singular behaviour of ITs for $|v|=1$ but rather we show, in an elementary way, that Axioms~ I and II are incompatible for Tachyonic-like ITs.
Thus tachyons cannot exist in spacetimes with $d>2$ in contradiction to speculations in many papers\footnote{We do note that the failure of Axiom~II is discussed in \cite{Marchildon_Antippa_Everett_1983} for 4 spacetime dimensions.} e.g. \cite{bilaniuk_deshpande_sudarshan_1962,Feinberg1967,Goldoni1972, Lord_Shankara_1977,Sutherland_Shepanski_1986, ParkPark1996}.

It is sufficient to consider the $d=3$ case. Let $S$ and $S'$ be inertial frames with coordinates $(x,y,t)$ and $(x',y',t')$, respectively, where $S'$ is moving at velocity $v$ relative to $S$ along a common $X,X'$ axis direction with $S=S'$ for $v=0$. Following a similar discussion to that of Section~2, we consider the following standard scenarios:
\begin{itemize}
	\item A particle at rest in $S'$ with coordinates $(0,y',t')$  and coordinates $(vt,y,t)$ in $S$ for constant $y,y'$. 
	\item A particle at rest  in $S$ with coordinates $(0,y,t)$ and coordinates $(-vt',y',t')$ in $S'$ for constant $y,y'$.
	\item A photon (with speed $c=1$) moving in the $X$ direction with coordinates $(t,0,t)$  in $S$ and coordinates  $(t',0,t')$ in $S'$.
\end{itemize}
Applying Axioms~I and II as before we find the IT is described by
\begin{align}\label{eq:MAB}
	\begin{pmatrix}
		x \\
		t 
	\end{pmatrix}
	=
	A(v)\begin{pmatrix}
		1 & v \\
		v  & 1
	\end{pmatrix}
	\begin{pmatrix}
		x'\\
		t'
	\end{pmatrix},\quad y=B(v) y',
\end{align}
for real $A(v),B(v)$ where $A(v)$ obeys \eqref{eq:AvAminv}  and where $B(0)=1$ and $B(v)B(-v)=1$ for $|v|\neq 1$. Note that $A(v),B(v)$ may include conformal factors.

Next consider a photon  moving  in the $Y'$ direction in $S'$ with coordinates $(0,t',t')$. By Axiom~II, this is observed in $S$ with coordinates $(t \sin \theta,t \cos\theta ,t)$ where  $\theta$ is the angle of direction of the photon motion relative to the $Y$ axis. From \eqref{eq:MAB} we obtain
\begin{align*}
t \sin\theta = v A(v)t',\quad 
t=A(v)t',
\quad t\cos \theta=B(v)t'.
\end{align*}
These imply that $ v=\sin{\theta} $ and $B(v)=A(v)\cos{\theta}$. Therefore $|v|=|\sin{\theta}| <1$ so that the IT must be Lorentz-like with $A(v)=\phi(v)\gamma(v)$ from \eqref{eq:phidef} for  conformal factor $\phi(v)$. Furthermore, $\cos\theta=\left(1-v^{2}\right)^{\half}=\gamma(v)^{-1}$ which implies that  $B(v)=\phi(v)$. Thus we have shown that Axiom~II implies that  the IT must be Lorentz-like with $|v|<1$  where
\begin{align}\label{eq:M3d}
	\begin{pmatrix}
		x \\
		t
	\end{pmatrix}
	=
	\phi(v)\gamma(v)\begin{pmatrix}
		1 & v \\
		v  & 1
	\end{pmatrix}
	\begin{pmatrix}
		x'\\
		t'
	\end{pmatrix},\quad y=\phi(v)y'.
\end{align} 
We may repeat the earlier time dilation argument of section~2 to 
obtain \eqref{eq:Ttau} again and conclude that space is anisotropic\footnote{Alternatively, we can repeat Einstein's argument \cite{Einstein.1905}, that if $\phi(v)\neq \phi(-v)$ then the relation $y=\phi(v)y'$ means that space is anisotropic.}  for $\phi(v)\neq 1$.
Thus Axiom~III implies that $\phi(v)=1$ so that  we find \eqref{eq:M3d} is the standard Lorentz transformation for $d=3$. The above arguments easily generalise to all $d>2$ by considering light which in one frame is moving perpendicularly to the direction of relative motion to the other frame.
Thus tachyons cannot exist in a spacetime with $d>2$ where Axioms~I and II both hold.


	\bibliographystyle{amsplain}
	\bibliography{ArXiv_Tachyon}

\end{document}